\documentclass[runningheads]{llncs}
\usepackage[T1]{fontenc}
\usepackage{graphicx}
\usepackage{natbib}
\usepackage{amsmath}
\usepackage[dvipsnames]{xcolor}
\usepackage{tcolorbox}
\usepackage{amssymb}
\usepackage{hyperref}
\usepackage{xurl}
\usepackage{enumitem}
\usepackage[utf8]{inputenc}

\begin{document}
\title{Why Open Source? A Game-Theoretic Analysis of the AI Race}
%
%
\author{
Andjela Mladenovic\inst{1,2} \and
Aaron Courville\inst{1,2,3} \and
Gauthier Gidel\inst{1,2,3}
}
\authorrunning{Mladenovic et al.}

\institute{
Université de Montréal \and
Mila -- Quebec AI Institute \and
Canada CIFAR AI Chair
}
%
%
%
\maketitle              
\begin{abstract}
In recent years, with the advancement of frontier AI, we have observed certain dynamics in open-sourcing and closed-sourcing decisions. We propose a game-theoretic model to analyze these dynamics in the current landscape of the AI race.  Our model builds on an R\&D race framework under a winner-takes-all setting, and it accounts for the cases where the players’ actions can be either discrete or continuous (i.e., partial open-sourcing, such as open weights). 
We show that determining the existence of a discrete pure non-trivial Nash equilibrium is NP-hard in general but that 
we can transform the discrete Nash existence computation into a MIP (Mixed-Integer Programming) problem, making it tractable for small instances using a standard MIP solver. Next, we show the existence and tractability of pure Nash equilibria in the continuous version of our problem, leveraging standard convex analysis results, and constructing an equivalent MIP formulation. Throughout this work, we leverage both our main technical results as well as surrounding technical analysis, to derive socially relevant insights that we believe can serve both to understand already existing decisions and dynamics and to potentially inform new policies.

\keywords{AI race, open source, game theory, Nash equilibrium, complexity.}
\end{abstract}
\section{Introduction}\label{sec:introduction}

The question of open-sourcing AI has become increasingly contentious within the AI community in recent years. Pursuing one of the potentially most transformative technologies of the modern era requires critical attention to safety and transparency. One of the major questions that arises is whether AI models should be open or closed source.
Frequently cited arguments in favor of open source models emphasize their potential to promote democratization, deter monopolization, and support innovation~\citep{eiras2024near}. Furthermore, \citet{wheelerOpenSourceSecurity} argues that the security of software systems can be improved by open-sourcing code, as this allows external auditing by the broader public and consequently increases vulnerability detection.
In contrast, the arguments against it emphasize that open-sourcing powerful AI models increases the risk of misuse, and therefore it leads to concerns about potential safety risks and societal harms~\citep{eiras2024near}. 

While the debate between open source and closed source AI has received significant attention, most of the current work has focused on answering questions surrounding the implications of open sourcing and consequently, whether open source is ethically the most sound answer.

In our work, we shift the focus from asking \textit{whether} open-sourcing should occur and \textit{what} its consequences might be, to instead providing a mathematical model that allows us to ask \textit{when} and \textit{why} open-sourcing would occur in the first place—within the context of the current geopolitical and technological AI race.

We argue that, besides being an interesting question to answer, which would help us better understand the current state of the AI race and its near future, we believe that approaching this problem primarily from a strategic angle will inform more moral and effective policies.
Specifically, different regulatory frameworks, such as the EU AI Act~\citep{EUAiAct2024}, have taken initiatives to promote open-sourcing strategies among AI development teams by providing certain exemptions to the Act, which ultimately reduce the cost of compliance and could be used towards R$\&$D efforts. While these exemptions are a great step toward more open AI development, they do not necessarily take into account the positions of AI development teams in the AI race and the dynamics between them. Therefore, understanding the current state of the AI race would allow for more flexible and nuanced regulations, leading to more effective policies.

While we do not deny the importance of economic and political specifics, we argue that a certain level of abstraction is not only useful but necessary as a first step toward understanding the dynamic nature of the AI race. Our model centers on the relationships between key actors -- whether companies or countries, which we collectively refer to as AI development teams -- striving to advance AI and ultimately take the lead in the AI race. We introduce this abstraction to ensure both that our model generalizes well across different levels of AI race dynamics and that it remains relevant to similar technological races, as also argued by~\citet{armstrong2016racing}.

\textbf{Our assumptions:}

While approaching the AI race involves a plurality of strategic perspectives, we focus on scientific progress and the advancement of AI models. First, we base this decision on available empirical evidence, such as current benchmarks, research publications, and public model reports~\citep{lmsys2025chatbotarena, openai2023gpt4, anthropic2025claude4, polyak2024moviegen}. We argue that benchmarks such as Chatbot Arena track progress across different technical performance metrics, including logical and mathematical reasoning, coding capabilities, and multimodal capabilities, and are widely used by communities in academia and industry to compare models and their overall performance. Furthermore, media coverage frequently uses these results when reporting about the AI race, and such these benchmarks are broadly recognized as measures of progress~\citep{economist2024ai, techradar2025claude_benchmark}. Second, we refer to existing AI race modeling work that adopts a similar strategic focus~\citep{young2025s, armstrong2016racing, cimpeanu2022artificial}.

The central assumption in our model is that by open-sourcing their AI models, AI development teams can benefit from contributions from the broader AI community. These contributions may range from direct contributions, such as bug fixes, potential model improvements, and different technical extensions, to indirect contributions, such as reputational gains, which can lead to improved recruitment opportunities and ultimately greater scientific progress~\citep{bonaccorsi2003open}. 

However, while the team that opens its model gains certain advantages, competing AI development teams in the race may also benefit by having access to their competitors’ models, both for the revealed knowledge and for building upon them. Given this tension in the AI race, we seek to understand when and why open-sourcing an AI model would be a rational choice for an AI development team.

\textbf{Our contributions:}

We propose a game-theoretic model to analyze when and why AI development teams choose to open-source their models as a strategic decision. To the best of our knowledge, this is the first model of the AI race in which players' actions explicitly involve committing to open-source or closed-source strategies.

We analyze a general setting of 
$n$-player games, allowing both continuous and discrete action spaces, which makes our model general and widely applicable. In the discrete actions setting, we establish the existence of Nash equilibria using well-known results, derive necessary and sufficient conditions for the existence of pure Nash equilibria, analyze their computational complexity, and propose tractable solution methods—an important step toward practical applications of our framework.
In the continuous actions setting, we establish the existence of Pure Nash equilibria results and propose a tractable solution method for finding them.

 Finally, throughout our work we derive additional technical analysis motivated both by recent events in the AI race landscape and by broader prosocial questions that may inform future policy design. Specifically, in our work we focus on questions related to Pure Nash equilibrium, since we believe its interpretation is appropriate for real life deterministic and stable decisions in the AI race.


\section{Related Work}\label{sec:related_work}

\textbf{AI Race Related Work}

This work proposes a game-theoretic model of the AI race, aiming to better understand the dynamics of a race where the possible actions are strictly defined as open-source or closed-source. While several prior studies have examined AI race dynamics, most have framed actions primarily in terms of safety distinguishing between “safe” and “unsafe” behaviors.
One of the seminal works on the AI race, \citet{armstrong2016racing} analyze it through the lens of Nash equilibria, where players’ payoffs represent a tradeoff between capabilities and safety. Players are incentivized to maximize their expected utility, but taking overly risky actions can result in zero utility. Along similar lines, \citet{cimpeanu2022artificial} examine the AI race with actions classified as safe or unsafe under comparable assumptions. \citet{pereira2020regulate} and \citet{young2025s} both concentrate on regulation: \citet{pereira2020regulate} discuss when regulatory intervention is necessary, while \citet{young2025s} models the AI race as an n-player game with resource sharing and emphasizes regulatory implementation to ensure safe outcomes. Young’s sharing option~\citep{young2025s} is comparable to our open-source action, although their primary emphasis on regulation differs from our focus. 
Finally, \citet{askell2019role} conceptually analyze responsible AI development as a collective action problem and highlight factors that could foster cooperation among AI companies towards that goal.

\textbf{Open Source Related Work}

While most of the related work on the AI race does not explicitly frame actions in terms of open-sourcing, a separate body of literature focuses on the open-sourcing of AI models. Specifically, several recent works examine the trade-offs between open-source and closed-source approaches. Both \citet{eiras2024near} and \citet{seger2023open} outline benefits and risks. While \citet{eiras2024near} argue that open-sourcing is the appropriate choice during the early-to-mid stages of model development and technology adoption, \citet{seger2023open} consider alternative strategies for sharing. While this line of work could be described as prescriptive—offering guidance on what the right choice might be—our more descriptive approach aims to clarify when and why such a choice would occur.

\textbf{R\&D Race and Winner Takes it all Related Work}

Work on the AI race taking the form of a perpetual R\&D race was first discussed in~\citep{askell2019role}. 
Work on patent races and competitive dynamics in innovation has been extensively studied throughout the years, starting with classical works such as~\citep{KamienSchwartz1972} and~\citep{Reinganum1981}, to more recent experimental analysis of the horizons of R\&D races under a certain level of parameter uncertainty~\citep{breitmoser2010understanding}, something that could be useful for the extension of our model.

\section{Model}\label{sec:model}

Consider the AI race between $n$ players, in which each player can choose between open-sourcing or closed-sourcing their model. We propose a discrete and continuous model of the AI race, in which players aim to maximize their scientific progress. Specifically, we analyze the dynamics of games where players can simply decide to fully open source or closed source
their models $a_i\in\{0,1\}$, where 0 action represents closed source, and 1 action represents open source (discrete model). Then we extend our discrete model to a continuous model where the players can choose to partially open source their model, i.e. $a_i\in\left[0,1\right]$ (continuous model).

Companies have a history of open-sourcing their models~\citep{touvron2023llama, rombach2022high, biderman2023pythia}. There are several benefits to doing so~\citep{bonaccorsi2003open}. Specifically, we denote $\delta_i$ as the scientific progress player $i$ gains from the community by going open source, and since other players benefit from player $i$ going open source, we denote $\Delta_{ij}$ as the gains player $j$ gains from player $i$ open sourcing their model. We assume that these gains are independent, which helps us avoid introducing an exponential number of variables.

 We also note that our model captures the natural assumption that not all players are at the same position in the AI race, and we denote their initial position (in terms of scientific progress) with the variable $s_i$.

Their new position would be: 

\[
\mu_i(a_1, \dots, a_k) = \delta_i a_i + \sum_{\substack{j = 1 \\ j \ne i}}^k \Delta_{ji} a_j + s_i + \Delta_i
\]

Where $\Delta_i$ represents the internal scientific progress player $i$ achieves independently of external collaboration or community support.
For the clarity of results we introduce new variable $d_i$, variable $d_i$ is simply the baseline position of player $i$ in the race after internal progress without external effects.
We rewrite their new position as:
\[
\mu_i(a_1, \dots, a_k) = \delta_i a_i + \sum_{\substack{j = 1 \\ j \ne i}}^k \Delta_{ji} a_j + d_i
\]

According to our setting R\&D race in which winner takes all players are not necessarily optimizing just for their position but their position to the best competitor’s position.

\[
u_i(a_1, \dots, a_k) 
= \mu_i(a_1, \dots, a_k) 
- \max_{j \neq i} \, \mu_j(a_1, \dots, a_k).
\]

\textbf{Matrix form.}
We can rewrite our model in matrix-vector form. Let $\mathcal{D}\in\mathbb{R}^{k\times k}$ be defined by
\[
\mathcal{D}_{ij}=
\begin{cases}
\delta_i & i=j,\\
\Delta_{ji} & i\neq j,
\end{cases}
\]

and let $\mathbf{a}=(a_1,\ldots,a_k)^\top$ and 
$\mathbf{d}=(d_1,\ldots,d_k)^\top$.

\medskip
\noindent
Then the progress profile is
\[
\mu(\mathbf a)=\mathcal D\mathbf a+\mathbf d \in \mathbb R^k,
\]
whose $i$-th component $\mu_i(\mathbf a)$ represents the progress of player $i$.
\medskip
\noindent

Player utilities are defined as
\[
u_i(\mathbf a)=\mu_i(\mathbf a)-\max_{j\neq i}\mu_j(\mathbf a)
=
e_i^\top\mu(\mathbf a)-\max_{j\neq i}e_j^\top\mu(\mathbf a),
\]
where $e_i$ is the $i$-th unit vector.

\section{Nash Equilibrium - Discrete Actions}\label{sec:NE_discrete_actions}
As we have mentioned in previous sections, most of the work on open source has been \textit{prescriptive}~\citep{eiras2024near} in its nature, while we aim for our work to be more \textit{descriptive}. We emphasize the importance of \textit{descriptive} as a first step towards more implementable potential policies. In that pursuit, we believe it is important to answer several crucial questions addressed in this section. First, while our proposed model's potential choices (actions) are defined as open source and closed source, we recognize that the true importance of those potential actions lies in their stability.
In this regard we are interested in answering when does our game have Nash equilibrium. We emphasize that the answer to this question is crucial for understanding stable long-term strategies 
and therefore implementing robust policies. But even more than just understanding understanding of the existence of Nash equilibrium, we are interested in the existence of Pure Nash equilibrium. We are interested in answering what the necessary and sufficient conditions are for AI development teams to have a deterministic and stable strategy, and in this section we answer questions such as what are necessary and sufficient for the existence of Pure Nash equilibrium.

\subsection{The Existence of Pure Nash Equilibrium - Discrete Actions}
\begin{lemma}
[Pure Nash Equilibrium Existence]\label{thm:thm1}
Consider the game defined in Section~\ref{sec:model}. 
A pure strategy profile $\mathbf{a}\in\{0,1\}^k$ is a Nash equilibrium if and only if for every player $i$,
\[
\max_{j \ne i} \mu_j(\mathbf{a})
- \max_{j \ne i} \mu_j(\mathbf{a}^{(i)})
+ (-1)^{a_i}\delta_i \le 0,
\]
where $\mathbf{a}^{(i)}=(a_1,\dots,1-a_i,\dots,a_k)$.
\end{lemma}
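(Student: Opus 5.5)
Since each player's action set is $\{0,1\}$, the only unilateral deviation available to player $i$ at a pure profile $\mathbf a$ is the flip to $\mathbf a^{(i)}$. By definition, $\mathbf a$ is a pure Nash equilibrium if and only if $u_i(\mathbf a)\ge u_i(\mathbf a^{(i)})$ for every $i$. So the plan is to compute $u_i(\mathbf a^{(i)})-u_i(\mathbf a)$ explicitly and check that it equals the left-hand side of the stated inequality.

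First, we write $\mu_i$ from the matrix form, $\mu_i(\mathbf a)=\delta_i a_i+\sum_{j\neq i}\Delta_{ji}a_j+d_i$. Going from $\mathbf a$ to $\mathbf a^{(i)}$ changes only the term $\delta_i a_i$, so
\[
\mu_i(\mathbf a^{(i)})-\mu_i(\mathbf a)=\delta_i\bigl((1-a_i)-a_i\bigr)=\delta_i(1-2a_i)=(-1)^{a_i}\delta_i ,
\]
since $1-2a_i$ equals $1$ when $a_i=0$ and $-1$ when $a_i=1$. This is the only place the binarity of $a_i$ enters. The competitors' positions $\mu_j$ for $j\ne i$ also change, because they contain $\Delta_{ij}a_i$. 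We do not simplify these changes; we keep them inside the maxima.

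Next, subtracting the utilities gives
\[
u_i(\mathbf a^{(i)})-u_i(\mathbf a)=(-1)^{a_i}\delta_i+\max_{j\ne i}\mu_j(\mathbf a)-\max_{j\ne i}\mu_j(\mathbf a^{(i)}).
\]
The condition that player $i$ has no profitable deviation, $u_i(\mathbf a^{(i)})-u_i(\mathbf a)\le 0$, is therefore exactly the displayed inequality. Requiring it for all $i$ gives both directions of the equivalence at once.

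There is essentially no obstacle. The only points needing care are the sign identity $(1-2a_i)=(-1)^{a_i}$ and the fact that the maxima must stay unsimplified, since the maximizing competitor can change when $a_i$ is flipped. Also, the $\mathbf a^{(i)}$ profiles are the only deviations to check because the action sets are binary; mixed deviations need not be considered, since a pure profile is a Nash equilibrium iff no pure deviation is profitable.
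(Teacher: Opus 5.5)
Your proof is correct and follows the same route as the paper: restrict to the single flip deviation $\mathbf a^{(i)}$, note that only the $\delta_i a_i$ term of $\mu_i$ changes (giving $\delta_i(1-2a_i)=(-1)^{a_i}\delta_i$), and keep the competitor maxima unsimplified. Your sign bookkeeping is actually cleaner than the paper's. Its displayed chain is labelled $u_i(\mathbf a)-u_i(\mathbf a^{(i)})\ge 0$, but the expression it computes is $u_i(\mathbf a^{(i)})-u_i(\mathbf a)$; you correctly identify the left-hand side as $u_i(\mathbf a^{(i)})-u_i(\mathbf a)$ and require it to be $\le 0$, which matches the stated inequality.
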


\begin{proof}
    Pure strategy profile \( \mathbf{a} = (a_1, \dots, a_k)\) is a Nash equilibrium if and only if for every player \(i\),
\[u_i(\textbf{a})
 \geq u_i(\mathbf{a}^{(i)})
\]

Then our Nash condition for every player $i$:
\[u_i(\mathbf{a})
  - u_i(\mathbf{a}^{(i)}) \geq 0 
\]

becomes  
\begin{align*}
&= u_i(\mathbf{a}) - u_i(\mathbf{a}^{(i)})\\
&= - \max_{\substack{j = 1 \\ j \ne i}}^{k} \mu_j(\mathbf{a}^{(i)}) +  \max_{\substack{j = 1 \\ j \ne i}}^{k} \mu_j(\mathbf{a}) + \delta_i (1 - a_i) - \delta_i a_i\\
&= - \max_{\substack{j = 1 \\ j \ne i}}^{k} \mu_j(\mathbf{a}^{(i)}) +  \max_{\substack{j = 1 \\ j \ne i}}^{k} \mu_j(\mathbf{a}) + \delta_i (1 - 2 a_i)\\
&= - \max_{\substack{j = 1 \\ j \ne i}}^{k} \mu_j(\mathbf{a}^{(i)}) + \max_{\substack{j = 1 \\ j \ne i}}^{k} \mu_j(\mathbf{a}) + (-1)^{a_i} \cdot \delta_i
\end{align*}
\end{proof}

The result obtained in \autoref{thm:thm1} aligns with the intuition from our game formulation: given that the strategies of all other players are fixed, an AI development team will choose to open source their code if and only if the benefit they receive from the community exceeds the difference between the benefit obtained by the best competitor when the team open sources their research and the benefit obtained by the best competitor when the team keeps the results of their work closed, given that the identity of the best competitor may change across these two scenarios.

While we have stated above the interpretation of the results of \autoref{thm:thm1}, we observe that the possibility of having different best competitors makes these results highly sensitive to the accuracy of all parameters in all utility functions of players.
In order to make our interpretation more robust and therefore more applicable for real-world examples in complex (often messy) systems we outline bellow  \autoref{cor:cor1} sufficient conditions for the existence of Pure Nash equilibrium. 

\begin{corollary}[Sufficient Condition for Pure Nash Equilibrium]\label{cor:cor1}
A pure strategy profile $(a_1, \dots, a_k)$ is a Nash equilibrium if, for every player $i$:
\[
\begin{cases}
\max_j \Delta_{ji} < \delta_i & \text{for } a_i = 1,\\[2mm]
\delta_i < \min_j \Delta_{ji} & \text{for } a_i = 0.
\end{cases}
\]
\end{corollary}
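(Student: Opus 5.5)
The plan is to verify the criterion of \autoref{thm:thm1} player by player. The only tool needed beyond the lemma is an elementary fact about maxima under coordinatewise shifts. For real vectors $(x_j)_{j\ne i}$ and $(c_j)_{j\ne i}$, with at least one index $j \ne i$ (so $k\ge 2$),
\[
\min_{j\ne i} c_j \;\le\; \max_{j\ne i}(x_j+c_j)-\max_{j\ne i}x_j \;\le\; \max_{j\ne i} c_j .
\]
This holds because $x_j+\min c \le x_j+c_j\le x_j+\max c$ for each $j$, and one then takes the maximum over $j$. This bound is what lets us avoid tracking the identity of the best competitor. That identity is exactly the source of sensitivity discussed after \autoref{thm:thm1}.

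First fix a player $i$ and compare $\mu_j(\mathbf a)$ with $\mu_j(\mathbf a^{(i)})$ for $j\ne i$. Only $a_i$ changes, so $\mu_j$ changes by exactly the spillover coefficient that multiplies $a_i$ in $\mu_j$. Write this coefficient as $\Delta_{(i\to j)}$, the gain player $j$ obtains when $i$ opens. This is the quantity the corollary's ``$\Delta_{ji}$'' must denote. It is the only index reading under which the statement is meaningful, since the spillovers received by $i$ do not enter $\mu_j$ at all. I would state this reading explicitly at the start of the proof. With it in place, $\mu_j(\mathbf a^{(i)})=\mu_j(\mathbf a)+(1-2a_i)\,\Delta_{(i\to j)}$.

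Next split into two cases.
\begin{itemize}[leftmargin=*]
\item Case $a_i=1$. Here $\mu_j(\mathbf a)=\mu_j(\mathbf a^{(i)})+\Delta_{(i\to j)}$. The shift bound gives $\max_{j\ne i}\mu_j(\mathbf a)-\max_{j\ne i}\mu_j(\mathbf a^{(i)})\le\max_j\Delta_{(i\to j)}$. Adding $(-1)^{1}\delta_i=-\delta_i$ then yields a quantity strictly below $0$ by the hypothesis $\max_j\Delta_{(i\to j)}<\delta_i$.
\item Case $a_i=0$. Here $\mu_j(\mathbf a^{(i)})=\mu_j(\mathbf a)+\Delta_{(i\to j)}$, so the lower half of the shift bound gives $\max_{j\ne i}\mu_j(\mathbf a)-\max_{j\ne i}\mu_j(\mathbf a^{(i)})\le-\min_j\Delta_{(i\to j)}$. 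Adding $+\delta_i$ gives a quantity strictly below $0$ by $\delta_i<\min_j\Delta_{(i\to j)}$.
\end{itemize}

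In both cases the inequality of \autoref{thm:thm1} holds, in fact strictly, for every $i$, so the profile is a pure Nash equilibrium. There is no real analytic obstacle. The only delicate point is the bookkeeping of which spillover index is affected by player $i$'s deviation, which is settled by the reading of $\Delta_{ji}$ fixed above. A secondary point is to note that $k\ge2$ is needed for the maxima over $j\ne i$ to be defined.
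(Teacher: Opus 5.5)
Your argument is correct and is the derivation the paper leaves implicit: the paper states the corollary with no separate proof, as a consequence of \autoref{thm:thm1}, and your shift bound $\min_j c_j \le \max_j(x_j+c_j)-\max_j x_j \le \max_j c_j$ is what turns that lemma into the stated condition. Your index correction is necessary, not just cosmetic: in the model the coefficient of $a_i$ in $\mu_j$ is $\Delta_{ij}$ (the paper itself writes $d_j+\Delta_{ij}$ in \autoref{cor:cor2}), and with the printed $\Delta_{ji}$ the statement is false --- for instance, with $k=2$, $d_1=d_2=0$, $\delta_1=1$, $\Delta_{21}=0$, $\Delta_{12}=5$, $\delta_2=10$, the profile $(1,1)$ satisfies the printed hypotheses yet player~1 gains $\Delta_{12}-\delta_1=4$ by closing, so you can replace ``the only index reading under which the statement is meaningful'' with this explicit counterexample.
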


\textbf{Deviation from Close Source Strategy}
Furthermore, we use our model to address an important and, given recent events, highly relevant question. 
Specifically, in \autoref{cor:cor2} we examine the following: 
if all AI development teams choose to keep their models closed source, 
under what conditions would one of them have an incentive to deviate and open source their results? This question holds particular relevance for future policymakers, who may seek mechanisms to encourage some AI development teams to open source their models, particularly in situations where all other teams have opted to keep their research closed.

\begin{corollary}\label{cor:cor2}
If all players choose the closed-source action, a player $i$ will have an incentive to deviate and the profile will not be a Nash equilibrium if:
\[
\boxed{
\max_{j \ne i} d_j
-
\max_{j \ne i} (d_j + \Delta_{ij})
 \geq -\delta_i.
}
\]
\end{corollary}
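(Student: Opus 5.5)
We obtain \autoref{cor:cor2} by specializing \autoref{thm:thm1} to the all-closed profile $\mathbf{a}=\mathbf{0}=(0,\dots,0)$ and negating the equilibrium condition for one player $i$. By \autoref{thm:thm1}, $\mathbf{0}$ is a Nash equilibrium if and only if, for every player $i$,
\[
\max_{j\ne i}\mu_j(\mathbf{0})-\max_{j\ne i}\mu_j(\mathbf{0}^{(i)})+(-1)^{0}\delta_i\le 0 .
\]
Hence the profile fails to be an equilibrium, and player $i$ wants to deviate, exactly when the left-hand side is strictly positive for that $i$.

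Next we evaluate the two maxima using $\mu(\mathbf a)=\mathcal D\mathbf a+\mathbf d$.
\begin{itemize}
\item At $\mathbf{0}$ every term involving $\mathcal D$ vanishes, so $\mu_j(\mathbf{0})=d_j$.
\item In $\mathbf{0}^{(i)}$ only $a_i=1$. For $j\ne i$ the only surviving off-diagonal term in $\mu_j$ is $\Delta_{ij}a_i$, the gain $j$ receives from $i$ opening. Hence $\mu_j(\mathbf{0}^{(i)})=d_j+\Delta_{ij}$.
\item The term $\delta_i$ affects only $\mu_i$, which is excluded from $\max_{j\ne i}$.
\end{itemize}
Substituting, the condition becomes $\max_{j\ne i}d_j-\max_{j\ne i}(d_j+\Delta_{ij})+\delta_i>0$. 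Moving $\delta_i$ to the right gives the boxed quantity compared against $-\delta_i$.

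The only delicate point is strict versus weak inequality. Negating the weak inequality $\le 0$ of \autoref{thm:thm1} yields a strict inequality, while the corollary is stated with $\ge$. In the boundary case of equality, player $i$ is indifferent: the profile is still a (weak) Nash equilibrium, but deviating is a weak best response. I would therefore either prove the statement with strict inequality, or state explicitly that the equality case is read as a weak incentive to deviate under tie-breaking toward open source. Beyond this, the proof is a direct substitution with no real obstacle.
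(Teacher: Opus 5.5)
Your proposal is correct and follows the route the paper intends. The paper gives no separate proof of this corollary; it is meant as a direct specialization of \autoref{thm:thm1} to $\mathbf a=\mathbf 0$, using $\mu_j(\mathbf 0)=d_j$ and $\mu_j(\mathbf 0^{(i)})=d_j+\Delta_{ij}$ for $j\neq i$, exactly as you do.

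Your concern about the boundary case is well founded. At equality player $i$ is only indifferent, so the all-closed profile can still be an equilibrium. For example, take $k=2$, $d_1=d_2=0$, $\Delta_{12}=\delta_1=1$, $\Delta_{21}=0$, $\delta_2=-1$: the boxed condition holds with equality for $i=1$, yet $\mathbf 0$ is a Nash equilibrium. So the $\ge$ in the statement must be read as $>$.
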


The result above has a natural practical interpretation. In many realistic settings, top players are already significantly ahead in the race, reflected by large values of $d_j$. For late or weaker players, it is reasonable to assume that their open-sourcing decision would have only a limited impact on the leaders, i.e., $\Delta_{ij}$ is relatively small. At the same time, the direct effect of open-sourcing on the weaker player—captured by $\delta_i$—may be strictly positive due to community contributions and reputational gains. Under such conditions, the inequality is more likely to hold, implying that players ranked lower in the AI race have an incentive to deviate and open-source their results. Intuitively, when a player is sufficiently behind, the potential self-benefits of openness can outweigh the risk of further strengthening already leading players.

\textbf{Social Welfare Maximization at Pure Nash Equilibrium}
Finally, beyond analyzing the conditions under which AI teams would deviate from closed-source strategies—a result we believe is relevant for potential regulatory interventions aimed at incentivizing open source—we ask a broader question: \textit{when does the choice to open-source, or more broadly when does strategic behavior align with socially desirable outcomes?} 

Specifically, we define a socially desirable outcome as one that maximizes total scientific progress. We argue that this objective is well motivated for two reasons. First, defining social welfare in terms of sum of utilities is well established in the game theory literature~\citep{harsanyi1955cardinal, bergson1938reformulation}. Second, the importance of promoting and potentially advancing scientific progress has been widely emphasized in the AI governance and safety literature~\citep{dafoe2018ai}.

\begin{corollary}
Let social welfare be defined as total scientific progress
\[
W(\mathbf{a})=\sum_{i=1}^k \mu_i(a_1,\dots,a_k),
\]

Then a pure Nash equilibrium maximizes social welfare if and only if, for every player $i$,
\[
a_i^{\mathrm{NE}}=
\begin{cases}
1, & \text{if } 
\delta_i \ge 
\max_{j\neq i}\mu_j(a_i=1,a_{-i}^*)-
\max_{j\neq i}\mu_j(a_i=0,a_{-i}^*),\\
0, & \text{otherwise}.
\end{cases}
\]

In particular, a sufficient condition for this alignment is
\[
a_i^{\mathrm{NE}}=
\begin{cases}
1, & \text{if } \delta_i \ge \max_{j\neq i}\Delta_{ji},\\
0, & \text{if } \delta_i \le \max_{j\neq i}\Delta_{ji}.
\end{cases}
\]
\end{corollary}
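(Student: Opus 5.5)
The plan is to separate the welfare computation from the equilibrium condition and then compare them player by player. First I would compute the marginal effect of player $i$'s action on welfare. Since $\mu(\mathbf a)=\mathcal D\mathbf a+\mathbf d$ is affine, $W(\mathbf a)=\mathbf 1^\top\mathcal D\mathbf a+\mathbf 1^\top\mathbf d$ is separable in the $a_i$. Switching $a_i$ from $0$ to $1$ changes $W$ by $\delta_i+\sum_{j\neq i}\Delta_{ij}$, whatever $a_{-i}$ is. So the welfare-maximizing profile is obtained coordinatewise: set $a_i=1$ exactly when this column sum of $\mathcal D$ is nonnegative.

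Second, I would restate the equilibrium side using \autoref{thm:thm1}. Writing the two cases $a_i=0$ and $a_i=1$ explicitly, a profile $\mathbf a^*$ is a pure Nash equilibrium if and only if each $a_i^*$ is a best response. Concretely, $a_i^*=1$ when
\[
\delta_i\ge \max_{j\neq i}\mu_j(1,a_{-i}^*)-\max_{j\neq i}\mu_j(0,a_{-i}^*),
\]
and $a_i^*=0$ otherwise, with ties resolved as in the lemma. Combining the two steps, an equilibrium maximizes $W$ if and only if, for each $i$, this best-response prescription coincides with the welfare prescription. I would phrase the displayed condition as the requirement that the equilibrium's best-response rule, read as above, selects the welfare-optimal coordinate for every $i$.

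Third, for the sufficient condition I would bound the best-competitor difference. Opening by $i$ raises each $\mu_j$ by $\Delta_{ij}\ge 0$, assuming nonnegative spillovers. Hence
\[
0\le \max_{j\neq i}\mu_j(1,a_{-i})-\max_{j\neq i}\mu_j(0,a_{-i})\le \max_{j\neq i}\Delta_{ij},
\]
using that the max of a sum is at most the max plus the max of the increments. This sandwich makes the rule uniform in $a_{-i}$ and shows it agrees with the general condition. The statement writes $\Delta_{ji}$; I would either adopt that indexing convention or flag that the relevant spillover is $i$'s effect on others.

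The main obstacle is the mismatch between the two criteria. The welfare criterion involves the sum $\sum_{j}\Delta_{ij}$, while the equilibrium criterion involves only the change in the maximum. Genuine "if and only if" alignment therefore needs the best-response rule to agree with the sign of $\delta_i+\sum_j\Delta_{ij}$. If all spillovers are nonnegative and $\delta_i\ge0$ whenever $a_i=1$, opening can only increase $W$, so every $a_i^*=1$ coordinate is welfare-consistent. The delicate direction is the $a_i^*=0$ players: there I would need $\delta_i+\sum_j\Delta_{ij}\le 0$, or else restrict to the interpretation in which the statement asserts alignment of the decision rule. I expect to handle this by making the nonnegativity and sign assumptions explicit. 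Under those assumptions, the sufficient condition reduces to checking that the stated threshold on $\delta_i$ implies the corresponding sign of the welfare marginal.
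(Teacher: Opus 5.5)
The paper states this corollary without proof, and as written it is false. So the gap is not in your route but in what you end up concluding. Your first step is correct and is the real crux: $W(\mathbf a)=\sum_i a_i\bigl(\delta_i+\sum_{j\neq i}\Delta_{ij}\bigr)+\sum_i d_i$, so welfare is maximized coordinatewise by the sign of $c_i:=\delta_i+\sum_{j\neq i}\Delta_{ij}$. You are also right that the relevant spillover is $\Delta_{ij}$.

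The problem is that the displayed condition is exactly the best-response condition of \autoref{thm:thm1}. It never mentions $W$ and holds (up to ties) at every pure equilibrium. The ``if'' direction would therefore make every pure equilibrium welfare-optimal. Take $k=2$, $d_1=d_2=0$, $\delta_1=1$, $\delta_2=6$, $\Delta_{12}=5$, $\Delta_{21}=2$. Then $u_1=-4a_1-4a_2=-u_2$, so $(0,1)$ is the unique, strict equilibrium. It satisfies the displayed condition ($\delta_1=1<5$, $\delta_2=6\ge 2$). It also satisfies the sufficient rule under either indexing, since $\delta_1\le\min\{\Delta_{12},\Delta_{21}\}$ and $\delta_2\ge\max\{\Delta_{12},\Delta_{21}\}$. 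All spillovers are nonnegative and all $\delta_i>0$. Yet $W=6a_1+8a_2$ gives $W(0,1)=8<14=W(1,1)$. Your step of ``phrasing the displayed condition as the requirement that the best-response rule selects the welfare-optimal coordinate'' therefore does not prove the statement. It replaces it with a different, essentially tautological claim.

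The repair in your last paragraph cannot work either. With nonnegative spillovers and $\delta_i>0$ (the paper's own premise that openness brings community gains), every $c_i$ is positive. So all-open is the unique welfare maximizer, and any equilibrium with a closed player is misaligned; the $0$-branch threshold $\delta_i\le\max_{j\neq i}\Delta_{ij}$ simply does not imply $c_i\le 0$.

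Your sandwich is also one-sided. The best-competitor increment lies in $[\min_{j\neq i}\Delta_{ij},\max_{j\neq i}\Delta_{ij}]$, so $\delta_i\ge\max_{j\neq i}\Delta_{ij}$ forces opening. But $\delta_i\le\max_{j\neq i}\Delta_{ij}$ does not force closing. If $i$'s opening helps only a far-behind rival and not the leader, the increment is $0$, and any $\delta_i>0$ makes opening strictly better. A uniform guarantee of closing needs $\delta_i<\min_{j\neq i}\Delta_{ij}$, as in \autoref{cor:cor1}.

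What can honestly be proved is two things. First, your step-two tautology: a pure equilibrium maximizes $W$ iff $a_i^*=1$ whenever $c_i>0$ and $a_i^*=0$ whenever $c_i<0$. Second, a one-sided sufficient condition: with nonnegative spillovers, $\delta_i>\max_{j\neq i}\Delta_{ij}$ for every $i$ makes opening strictly dominant, so all-open is the unique equilibrium and it maximizes $W$.
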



\subsection{Complexity of Finding Pure Nash Equilibrium - Discrete Actions}

In the previous subsection, we presented results regarding sufficient and necessary conditions for the existence of (Pure) Nash equilibrium and provided their interpretation. In this section we will address the question of the computational complexity of finding such equilibria. 
While the solutions to Pure Nash Equilibrium can be computed for several players, we argue that the following section besides game theoretical contribution is important for scalability of our model. In this section we prove that determining existence of non-trivial  Pure Nash Equilibrium is NP-hard. However, we also show that our problem can be formulated as Mixed-Integer Program (MIP), making it tractable using MIP solvers.

\begin{theorem}
In a game defined in \autoref{sec:model}, where each player chooses open source or closed source action (binary action) determining the existence of a non-trivial pure Nash equilibrium is NP-hard, where the trivial equilibrium is the profile in which all players choose the closed-source action.
\end{theorem}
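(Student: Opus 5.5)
The plan is a polynomial-time reduction from a classical NP-complete problem. Independent Set looks most natural, with 3-SAT as a fallback. The reduction maps an instance to a game of the form in Section~\ref{sec:model}, given by a matrix $\mathcal D$ and a vector $\mathbf d$. The goal is that the game has a pure Nash equilibrium other than $\mathbf 0$ if and only if the instance is a yes-instance. Throughout, the equilibrium test is \autoref{thm:thm1}. That lemma reduces stability of a profile $\mathbf a$ to a local condition per player. Player $i$ must not want to flip, meaning $\delta_i$ must be at least (or, for $a_i=0$, at most) the jump in $\max_{j\ne i}\mu_j$ caused by $i$ opening. The point of the construction is that this jump is \emph{not} simply $\max_j\Delta_{ij}$ as in \autoref{cor:cor1}. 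It depends on which competitor is currently on top, and that dependence is what will encode combinatorial constraints.

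\textbf{Construction.} Given a graph $G=(V,E)$ and a target $K$, I create one player per vertex $v$ with common baseline $d_v=0$, self-gain $\delta_v=1$, and spillover $\Delta_{uv}=M\gg 1$ for $\{u,v\}\in E$ (and $0$ otherwise). The intended effect is that if $u$ opens while some neighbour $v$ is open, $v$ becomes a strictly higher competitor. The jump in $u$'s best-competitor position is then of order $M>\delta_u$, so $u$ prefers to close. Open sets in an equilibrium should therefore be independent. To force size at least $K$ and to make all-closed the only ``cheap'' outcome, I add a referee player $r$ with baseline $d_r=T$. Each vertex gives spillover $\Delta_{vr}=\varepsilon$ to $r$, while $r$'s own parameters make it always close, using the $a_i=0$ branch of \autoref{cor:cor1}. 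Choosing $T$ and $\varepsilon$ so that $r$ remains the best competitor of every vertex unless at least $K$ vertices are open then couples the threshold to the equilibrium conditions. With $r$ in front, a vertex's deviation moves the maximum by exactly $\varepsilon<\delta_v$, so opening is profitable. Below the threshold, I will try to arrange the parameters so that some open vertex wants to close, for instance by adding a second auxiliary player that overtakes it. Each parameter is polynomially bounded, so the reduction is polynomial.

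\textbf{Correctness.} Both directions are checked with \autoref{thm:thm1}. Given an independent set $S$ of size at least $K$, I set $a_v=1$ exactly for $v\in S$. I then verify the inequality for open vertices, for closed vertices (where opening would raise an open neighbour by $M$, or be unprofitable), and for the auxiliary players. Conversely, suppose a non-trivial equilibrium exists. Its open vertex set must be independent by the $M$-argument, and it must have size at least $K$ by the threshold argument. The auxiliary players contribute no open coordinates, so the equilibrium yields an independent set of size at least $K$.

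\textbf{Main obstacle.} The hard part is the gadget calibration. Because $u_i$ involves a single $\max$ over all opponents, the identity of the best competitor is global and can switch in unintended ways. In particular, a vertex with many open non-neighbours could itself become everyone's leader, and the ``size at least $K$'' requirement is awkward to enforce through one max. My first attempt would be to use distinct baselines as tie-breakers. Failing that, I would switch to a 3-SAT reduction in which each clause has a dedicated competitor that overtakes a variable player exactly when the clause is unsatisfied. Each player's best competitor is then determined by a locally checkable event, and \autoref{thm:thm1} can be verified clause by clause.
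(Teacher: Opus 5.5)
Your proposal is a plan rather than a reduction. The gadget that must enforce ``at least $K$ open vertices'' is never specified (``I will try to arrange the parameters\dots''), and the correctness paragraph only names the inequalities you would check. The parts you do specify already fail. Take the star $K_{1,p}$ with $p\ge 2$ and $K=p$, so the leaf set $S$ is the only independent set of size at least $K$, and put $R=T+\varepsilon p$.

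If $R>M$, the closed centre gains by opening. The leaves rise only to $1+M<R+1$ and $r$ rises by $\varepsilon<1$, so the centre's best competitor rises by less than $\delta=1$. If $R\le M$, the centre, at height $Mp$, is every leaf's unique best competitor. A leaf that closes lowers it by $M$ at a cost of $\delta=1$. So for every $T$ and every $\varepsilon\in(0,1)$, the profile you build from $S$ is not an equilibrium.

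The reason is structural. A spillover is received whether or not the recipient is open, so opening $u$ lifts open and closed neighbours alike. The lift affects $u$'s payoff only through the height of the leader.
\begin{itemize}
\item Whenever $r$ is the best competitor of an open vertex $u$, closing lowers $\max_{j\ne u}\mu_j$ by at most $\varepsilon<1$. So $u$ is stable even if it has open neighbours, and your $M$-argument for independence gives nothing.
\item Whenever $r$ is overtaken at an independent $S$, the leader is a closed vertex $w$ at height $M|N(w)\cap S|$. Its open neighbours want to close unless some non-adjacent player is within about $1$ of the top.
\end{itemize}
Which case occurs depends on $T+\varepsilon|S|$ versus the neighbourhood counts $|N(w)\cap S|$, not on $|S|$. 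So a single $\max$ cannot implement the threshold $|S|\ge K$.

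The paper reduces from 3-SAT. Its clause players are essentially your fallback's dedicated competitors, but the construction needs two ingredients you do not mention.
\begin{itemize}
\item \textbf{A floor.} Variable players have constant position $\mu_i=0$, so they care only about the leader. A switch player sits at the constant level $M-\tfrac12$, strictly between every satisfied clause player ($<M-\tfrac12$) and every violated one ($\ge M$). At a satisfying assignment no flip can lower the leader, and a flip that violates a clause raises it. So every satisfying assignment is an equilibrium.
\item \textbf{A tilt.} Every clause player carries the term $\beta\sum_{t\le k}a_t$ with $\beta k<\tfrac12$. At a non-satisfying profile all violated clause players then sit at $M+\beta\sum_t a_t$, and any open variable player strictly gains by closing. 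All-closed is therefore the only candidate equilibrium among non-satisfying assignments, which is why the statement is about non-trivial equilibria.
\end{itemize}
Without the tilt the reduction breaks. For an unsatisfiable formula every profile would be an equilibrium, since no single flip removes all violated clauses and every variable player is indifferent.

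If you write this route up, handle two details. First, pre-check whether the all-false assignment satisfies $\Phi$. Second, give the auxiliary players negative self-gain $\delta<0$, so that they strictly prefer to close and ``non-trivial'' refers to the variable players.
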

\begin{proof}

To prove that determining the existence of a non-trivial pure Nash equilibrium is NP-hard, we construct a polynomial-time reduction from the well-known NP-complete problem 3-SAT to our formulation of the Pure Nash Equilibrium problem. In this proof we will show that  if and only if the solution that satisfies 3-SAT problem exists, Pure Nash Equilibrium exists in our game.

\textbf{3-SAT Problem}
The 3-SAT problem is a well-known NP-hard problem that is a special case of the general Boolean satisfiability problem. The 3-SAT problem has a formula conjunctive normal form (CNF) and each clause has exactly three literals. Without loss of generality let us analyze the following 3-SAT problem:

\[
\Phi = C_1 \land C_2 \land \dots \land C_m
\]

where each clause \( C_i \) is taking the form:

\[
C_i = (l_{i1} \lor l_{i2} \lor l_{i3})
\]
\begin{itemize}
    \item Each literal is either a variable \( x_t \) or its negation \( \neg x_t \), for some \( 1 \leq t \leq k \).
    \item \( k \) Boolean variables \( x_1, x_2, \dots, x_k \).
\end{itemize}

The goal is to determine whether there exists an assignment $\varphi$ to the variables \( x_1, x_2, \dots, x_k \) such that the formula \( \Phi(\varphi) \) is true.
Finding an instance $\varphi$ for which \( \Phi(\varphi) \) is true known to be NP hard, while verifying if an instance $\varphi$ satisfies formula $\Phi$ is possible to be done in polynomial time.










\textbf{Game Construction:}
Now we want to construct game $G(a(\varphi))$ that we will be able to formulate with polynomial translation from a given 3-SAT problem.
Consider the game that has $k + m + 1$ players in total, with each one of them having action $a_i$ where $i = 1 \dots k+ m + 1$. We name these players differently according to their role and their utility, although their utilities is still defined as in \autoref{sec:model}.
We classify those players according to their utilities and their roles in the following manner: 
\begin{itemize}
    \item $k$ Variable players 
    \item $m$ Clause players
    \item $1$  Switch player
\end{itemize}

\textbf{Variable Players:}
Let's say that any player $i$ where $i \in \{1, \dots k\}$ is variable player. There actions are $a_1, \dots a_{k}$. Action of each Variable player corresponds to variable $x_k$ in original 3-SAT problem ($\Phi$). However, please note that their position utility $\mu$ does not depend on their action because for each variable player $i$ we define their $\mu$ ranking utility as:
\[
    \mu_{i}(\mathbf{a}) = 0
\]
Then their overall utility $u$ is:
\[
    u_{i}(\mathbf{a}) = \mu_{i}(\mathbf{a}) - \max_{j \ne i} \mu_j(\mathbf{a}) = - \max_{j \ne i} \mu_j(\mathbf{a})
\]

\textbf{Clause Players:}
Now we define clause player $i$ where $i \in \{k+1, \dots m+k\}$. Each clause player's utility captures the value of one clause in original SAT problem $\Phi$. So for each clause

Each clause player $j$ defines a linear function based on its clause $C_j = \ell_{j1} \lor \ell_{j2} \lor \ell_{j3}$, where each literal $\ell_{jk}$ corresponds to a variable index $i_{jk} \in \{1, \dots, n\}$:
\begin{align*}
    f_{C_j}(a) = \sum_{k=1}^{3} \begin{cases}
        a_{i_{jk}} & \text{if } \ell_{jk} = x_{i_{jk}} \\
        1 - a_{i_{jk}} & \text{if } \ell_{jk} = \lnot x_{i_{jk}}
    \end{cases}
\end{align*}

Let $M \gg 0$ and $\alpha = 1$ and $\beta \ll \alpha$ . Now we define clause utility as:
\[
    \mu_j(a) = M - \alpha \cdot f_{C_j}(a) + \beta(a_1 + \dots + a_k)
\]
So:
\begin{itemize}
  \item If clause is \textbf{satisfied}: $f_{C_j}(a) \ge 1$ $\Rightarrow$ $\mu_j(a) < M - 0.5$
  \item If clause is \textbf{violated}: $f_{C_j}(a) = 0$ $\Rightarrow$ $\mu_j(a) \geq M$
\end{itemize}

\textbf{Switch Player:} Finally, we define switch player $i = m+k+1$ in the following manner:

\[\mu_{m+k+1}(a) = M - 0.5\]

\noindent\textbf{Direction (\(\Rightarrow\)):}
{\bfseries\boldmath
\[
\Phi(\varphi) = \text{True} \;\Rightarrow\; \mathbf{a(\phi)} \text{ is a pure Nash equilibrium.}
\]
}

If $\Phi(\varphi) = \text{True}$ it means that every clause $C_j$ is satisfied for instance $\varphi$. 

This means that the ranking utilities $\mu$ have following values:
\begin{itemize}[label=$\circ$, left=0pt]
    \item $\mu_i(\mathbf{a}) = 0 \quad \forall i \in \{1,\dots,k\}$ (Variable Players)
    \item $\mu_i(\mathbf{a}) < M - 0.5\quad \forall i \in \{k+1,\dots,k+m\}$ (Clause Players)
    \item $\mu_i(\mathbf{a}) = M - 0.5 \quad i = k+m+1$ (Switch Player)
\end{itemize}

Since the utilities $u_i(\mathbf{a})$ for Clause Players and Switch Player does not depend on their action $a_i$ we have that the following Pure Nash condition: 

\[u_i(\mathbf{a}) - u_i(\mathbf{a}^{(i)}) \geq 0\]
always holds.

In order to show that it is Pure Nash Equilibrium we want to show that above condition holds for Variable Players too.
As stated in our construction we know that for Variable Players:
\[u_i(\mathbf{a}) =  - \max_{j \ne i} \mu_j(\mathbf{a}) = - M + 0.5\]
Now notice since potential values for $u_i(\mathbf{a}) = - M + 0.5$ and $u_i(\mathbf{a}) = - M$ (when one of the clauses is violated) we showed that player $i$ truly cannot improve their value in any way and therefore the \[u_i(\mathbf{a}) - u_i(\mathbf{a}^{(i)}) \geq 0\] holds for Variable Players too, which finalizes this part of proof since we have just shown that Nash Equilbirum condition holds for all players.

\noindent\textbf{Direction (\(\Leftarrow\)):}
\textbf{\boldmath
\[
\Phi(\varphi) = \text{True} \;\Leftarrow\; \mathbf{\phi} \text{ is a pure Nash equilibrium.}
\]
}

Proving that 
\[
\varphi \text{ is a Pure Nash equilibrium} \quad \implies \quad \Phi(\varphi) = \text{True}
\]
is equivalent to showing that 
\[
\Phi(\varphi) \neq \text{True} \quad \implies \quad \varphi \text{ is not a Pure Nash equilibrium}.
\]
So, lets assume that $\Phi(\varphi) \neq \text{True}$. If $\Phi(\varphi)$ is not true that means that there exists at least one clause $C_j$ that is not satisfied.

Notice that if $C_j$ is not satisfied, $f_{C_j}(a) = 0$ which implies \[
    \mu_j(a) = M + \beta(a_1 + \dots + a_k)
\].

This means that the ranking utilities $\mu$ have the following values:

\begin{itemize}[label=$\circ$, left=0pt]
    \item $\mu_i(\mathbf{a}) = 0 \quad \forall i \in \{1,\dots,k\}$ (Variable Players)
    
    \item $\forall i \in \{k+1,\dots,k+m\}$ (Clause Players)
    \begin{itemize}[label=--, left=1.5em]
        \item Case 1: $\mu_i(\mathbf{a}) < M$ for $i \in \{k+1,\dots,k+m\} \setminus S$  
        \item Case 2: $\mu_i(\mathbf{a}) = M$ for $i \in S \subseteq \{k+1,\dots,k+m\}$, where $S \neq \emptyset$
    \end{itemize}

    \item $\mu_i(\mathbf{a}) = M - 0.5 \quad i = k+m+1$ (Switch Player)
\end{itemize}

Similarly, as in first part of this proof we are only interested in analyzing if Nash Equilibrium condition holds for Variable Players (since for other players it always holds).

In order to show that it is not Pure Nash Equilibrium we want to show that above condition does not for some Variable Player.
As stated in our construction we know that for Variable Players:
\[u_i(\mathbf{a}) =  - \max_{j \ne i} \mu_j(\mathbf{a}) = - M - \beta(a_1 + \dots + a_k)\]

Now lets analyze two cases: 

\underline{\text{Case 1:}}
$(a_1, \dots a_i \dots, a_k) = (0, \dots 0)$
If $\mathbf{a} = (0, \dots 0)$, we have that the utility of Variable players is:
\[u_i(\mathbf{a}) = - M\]

In which case $u_i(\mathbf{a}^{(i)})$ can be either $-M$ either $-M + 0.5$ (if the change in one action lead to all clauses being satisfiable). In the first case $\mathbf{a}$ was not Pure Nash Equilibrium to start, in the second case it is Pure Nash Equiliburm, but it is not non-trivial Pure Nash Equilibirum.

\underline{\text{Case 2:}}
For any $(a_1, \dots a_i \dots, a_k)$
where $(a_1, \dots a_i \dots, a_k)\neq (0, \dots 0)$.

In this case it means that there exists $i$ such that $i\neq0$.

Given that 

\[u_i(\mathbf{a}) =  - \max_{j \ne i} \mu_j(\mathbf{a}) = - M - \beta(a_1 + \dots + a_k)\]

we can see that $u_i(\mathbf{a}^{(i)}) \geq u_i(\mathbf{a})$ since by flipping $i$, $u_i(\mathbf{a}^{(i)}) = u_i(\mathbf{a}) + \beta$ or $u_i(\mathbf{a}^{(i)}) = -M$ (in case all clauses became satisfied by this change). Either way, we can see that therefore $\mathbf{a}$ does not satisfy solution for Pure Nash Equilibrium.

\end{proof}

\textbf{Pure Nash Equilibrium as MIP}

While we have showed in subsection above that determining the existence of a non-trivial solution for Pure Nash Equilibrium is NP hard, in the theorem bellow it can be reformulated as Mixed-Integer Program.

\begin{theorem}[MIP Characterization of Pure Nash Equilibria]
    A pure Nash equilibrium in this game can be formulated as a Mixed-Integer Program (MIP).
\end{theorem}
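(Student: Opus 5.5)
We encode the equilibrium conditions of \autoref{thm:thm1} as linear constraints with binary variables $a_i\in\{0,1\}$ and auxiliary continuous variables. The only nonlinear pieces are the inner maxima $\max_{j\ne i}\mu_j(\mathbf a)$ and $\max_{j\ne i}\mu_j(\mathbf a^{(i)})$, and the bilinear dependence of $\mathbf a^{(i)}$ on $a_i$. Everything else is affine in $\mathbf a$, because $\mu(\mathbf a)=\mathcal D\mathbf a+\mathbf d$. We will therefore model each maximum exactly using big-$M$ constraints and indicator binaries. We then state the Nash inequality of \autoref{thm:thm1} as a linear constraint on these auxiliaries. Feasibility of the resulting system is equivalent to the existence of a pure equilibrium. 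A nontriviality constraint $\sum_i a_i\ge 1$ can be added to match the preceding NP-hardness theorem.

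\textbf{Step 1: deviation profiles.} For each player $i$ and each $j\ne i$, the value $\mu_j(\mathbf a^{(i)})=\mu_j(\mathbf a)+\Delta_{ij}(1-2a_i)$ is already affine in $\mathbf a$, since flipping $a_i$ changes $\mu_j$ by $\pm\Delta_{ij}$. Similarly, $(-1)^{a_i}\delta_i=\delta_i(1-2a_i)$. So no products of variables arise, which is the key simplification. We write $\nu_{ij}:=\mu_j(\mathbf a^{(i)})$ as an affine expression.

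\textbf{Step 2: exact max via big-$M$.} For each $i$ we introduce continuous variables $z_i$ and $w_i$ meant to equal $\max_{j\ne i}\mu_j(\mathbf a)$ and $\max_{j\ne i}\nu_{ij}$. For $z_i$ we impose $z_i\ge\mu_j(\mathbf a)$ for all $j\ne i$. We add binaries $y_{ij}$ with $\sum_{j\ne i}y_{ij}=1$ and $z_i\le\mu_j(\mathbf a)+M(1-y_{ij})$. Here $M$ is a bound on the spread of all $\mu$ values over $\{0,1\}^k$, for instance $M=\max_j|d_j|\cdot 2+2\sum_{i,j}|\mathcal D_{ij}|$. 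These constraints force $z_i$ to equal the maximum exactly. We model $w_i$ the same way with binaries $y'_{ij}$.

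\textbf{Step 3: the equilibrium constraint.} We add $z_i-w_i+\delta_i(1-2a_i)\le 0$ for every $i$. By \autoref{thm:thm1}, a binary $\mathbf a$ extends to a feasible point if and only if it is a pure Nash equilibrium. Any objective can be attached, for example maximizing the social welfare $\mathbf 1^\top(\mathcal D\mathbf a+\mathbf d)$ to select among equilibria. The size is $O(k^2)$ variables and constraints, so the construction is polynomial.

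\textbf{Main obstacle.} The hard part is the direction of the max that appears with a positive sign, namely $z_i$. An upper bound via big-$M$ is needed there, because otherwise $z_i$ could be inflated, and the exact-equality encoding in Step 2 with a valid finite $M$ handles this. It is worth noting that $w_i$ appears with a negative sign, so in principle it only needs a lower-bound relaxation. It must not be overestimated, however, which again requires the upper-bound binaries. So we use exact encodings for both and verify carefully that $M$ dominates every possible gap between the $\mu$ values.
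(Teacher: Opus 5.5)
Your proposal is correct and follows essentially the same route as the paper: exact big-$M$ encodings, using one-hot indicator binaries, of both $\max_{j\ne i}\mu_j(\mathbf a)$ and $\max_{j\ne i}\mu_j(\mathbf a^{(i)})$, followed by the inequality of \autoref{thm:thm1} imposed as a linear constraint. You add a few useful details: you note explicitly that $\mu_j(\mathbf a^{(i)})=\mu_j(\mathbf a)+\Delta_{ij}(1-2a_i)$ is affine, and you give a valid $M$; your direction $z_i-w_i\le(2a_i-1)\delta_i$ is also the one consistent with \autoref{thm:thm1}, whereas the paper's printed constraint (N) has the inequality reversed.

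One small correction to your closing remark: the roles of $z_i$ and $w_i$ are swapped there. Inflating $z_i$ is harmless, so the lower bounds $z_i\ge\mu_j(\mathbf a)$ alone suffice for it; it is $w_i$ that must not be overestimated and therefore genuinely needs the binary upper bounds. This does not affect correctness, because you encode both maxima exactly.
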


\begin{proof}

We want to show that the condition for Pure Nash Equilibrium can be rewritten as MIP (Mixed-Integer Program) and therefore we can use MIP solvers to solve it.

\begin{small}
\begin{center}
\fbox{%
\parbox{0.97\columnwidth}{%
\textbf{MIP formulation for Pure Nash Equilibrium}

\textbf{Variables:}
\[
\begin{aligned}
&a_i \in \{0,1\} && \text{($k$ player actions)}\\
&z_{i}, \tilde z_i  \in \mathbb{R} && \text{($2k$ cont.)}\\
&y_{i,j}, \tilde y_{i,j} \in \{0,1\} && \text{($2k$ indicators)}
\end{aligned}
\]

\textbf{Constraints:}

\[
\begin{aligned}
&z_{i} \ge \mu_j(\mathbf{a}), &&\forall i,j\ne i \\
&z_{i} \le \mu_j(\mathbf{a}) + (1-y_{i,j})M, &&\forall i,j\ne i\\
&\sum_{j\ne i} y_{i,j} = 1, &&\forall i
\end{aligned}
\]

\[
\begin{aligned}
&\tilde z_{i} \ge  \mu_j(\mathbf{a}^{(i)}), &&\forall i,j\ne i \\
&\tilde z_{i} \le \mu_j(\mathbf{a}^{(i)}) + (1-\tilde y_{i,j})M, &&\forall i,j\ne i \\
&\sum_{j\ne i} \tilde y_{i,j} = 1, &&\forall i
\end{aligned}
\]

\[
z_{i} - \tilde z_{i} \ge (2a_i - 1)\delta_i,
\quad \forall i,\mathbf{a}
\tag{N}
\]

\textbf{Objective:} $\min 0$ (feasibility)
}%
}
\end{center}
\end{small}


Above we stated our problem as MIP formulation. We have stated over which variable we are optimizing. Our problem is of feasibility nature where we are trying to find variables that satisfy the constraint and therefore there is no objective we are optimizing over, since Pure Nash Equilibrium condition is perfectly captured in our constraints.

Specifically,

\begin{equation}
\begin{aligned}
&z_{i} \ge \mu_j(\mathbf{a}), &&\forall i,j\ne i \\
&z_{i} \le \mu_j(\mathbf{a}) + (1-y_{i,j})M, &&\forall i,j\ne i\\
&\sum_{j\ne i} y_{i,j} = 1, &&\forall i
\end{aligned}
\end{equation}

This construction ensures that $z_{i} = \max_{j \ne i} \mu_j(\mathbf{a})$. Let us examine these conditions for a fixed player $i$. First, note that the constraint
\[
\sum_{j \ne i} y_{i,j} = 1,
\]
together with $y_{i,j} \in \{0,1\}$, guarantees that exactly one $y_{i,j^*} = 1$ while all other $y_{i,j} = 0$. Then, we observe that
\begin{equation}
\begin{aligned}
z_{i} &\ge \mu_j(\mathbf{a}), &&\forall j\ne i \\
z_{i} &\le \mu_j(\mathbf{a}) + (1-y_{i,j})M, &&\forall j\ne i
\end{aligned}
\end{equation}
forces $z_i = \mu_{j^*}(\mathbf{a})$. Now, suppose for contradiction that $z_i$ is not the maximum, i.e., there exists some $t \ne j^*$ such that
\[
\mu_t(\mathbf{a}) = \max_{j \ne i} \mu_j(\mathbf{a}) > \mu_{j^*}(\mathbf{a}).
\]
By construction, $y_{i,t} = 0$ and the lower bound constraint $z_i \ge \mu_t(\mathbf{a})$ still holds. But then we would have
\[
z_i = \mu_{j^*}(\mathbf{a}) < \mu_t(\mathbf{a}) \le z_i,
\]
which is a contradiction. Therefore, the assumption is false, and it follows that
\[
z_i = \max_{j \ne i} \mu_j(\mathbf{a}).
\]

Similarly, we have $\tilde z_i = \max_{j \ne i} \mu_j(\mathbf{a}^{(i)})$ for each player $i$ and profile $\mathbf{a}$. These two conditions with \[
z_{i} - \tilde z_{i} \ge (2a_i - 1)\delta_i,
\quad \forall i,\mathbf{a}
\tag{N}
\] perfectly capture Pure Nash Equilbirum condition.

\end{proof}
\section{Nash Equilibrium - Continuous Actions}\label{sec:NE_continuous_actions}
In the previous section, we presented results for the case in which players’ actions are discrete. In this section, we analyze cases where players are allowed to partially open-source their models (i.e., open weights).

\subsection{The Existence of Pure Nash Equilibrium - Continuous Actions}

\begin{theorem}[Existence of Pure Nash Equilibrium]
Consider a $k$-player game where each player $i \in \{1, \dots, k\}$ chooses an action $a_i \in [0,1]$, and the utility function for player $i$ is given by
\[
u_i(a_1, \dots, a_k) = \mu_i(a_1, \dots, a_k) - \max_{\substack{j=1 \\ j \ne i}}^{k} \mu_j(a_1, \dots, a_k),
\]
where
\[
\mu_i(a_1, \dots, a_k) = \delta_i a_i + \sum_{\substack{j=1 \\ j \ne i}}^{k} \Delta_{ji} a_j + d_i,
\]
with constants $\delta_i, \Delta_{ji}, d_i \in \mathbb{R}$.

Then, a Pure Nash equilibrium exists.
\end{theorem}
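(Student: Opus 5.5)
Player $i$'s own action $a_i$ enters its utility in a very simple way. Expand
\[
u_i(\mathbf a)=\delta_i a_i+\sum_{j\ne i}\Delta_{ji}a_j+d_i-\max_{j\ne i}\Big(\Delta_{ij}a_i+\delta_j a_j+\sum_{l\ne i,j}\Delta_{lj}a_l+d_j\Big).
\]
For fixed $a_{-i}$, each term inside the max is affine in $a_i$. The max of finitely many affine functions is convex in $a_i$, so $u_i(\cdot,a_{-i})$ is \emph{concave} in $a_i$: an affine function minus a convex one. This is the key structural fact, and it means we can use the standard Debreu--Glicksberg--Fan existence theorem for concave games.

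\textbf{Step 1: check the hypotheses.} The strategy set $[0,1]$ of each player is nonempty, compact and convex. Each $u_i$ is continuous in $\mathbf a$ jointly, since $\mu$ is affine in $\mathbf a$ and a finite max of continuous functions is continuous. And $u_i$ is concave, hence quasi-concave, in $a_i$ by the computation above.

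\textbf{Step 2: invoke the theorem.} The Debreu--Glicksberg--Fan theorem then gives a pure Nash equilibrium. If I want the argument self-contained, I would build the best-response correspondence
\[
B_i(a_{-i})=\arg\max_{a_i\in[0,1]}u_i(a_i,a_{-i}).
\]
It is nonempty and compact by continuity on a compact set, and convex because the superlevel set of a concave function on an interval is an interval. Berge's maximum theorem shows it has a closed graph, so Kakutani's fixed point theorem applied to $B=\prod_i B_i$ on $[0,1]^k$ yields a fixed point, which is a pure equilibrium.

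\textbf{Step 3: the main obstacle.} The only genuinely nontrivial point is concavity in the player's own action. It rests on the sign structure: $a_i$ enters the subtracted max only through nonnegative-coefficient-free affine terms, and the subtraction of a convex function is what gives concavity. Because all coefficients are arbitrary reals, I would stress that no sign assumption on $\delta_i$ or $\Delta_{ji}$ is needed: every component of $\mu$ is affine in $a_i$ regardless of sign.

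I also note that $u_i$ is piecewise linear and concave in $a_i$. This means best responses are attained at breakpoints or at the endpoints $0,1$, which is what will later permit an exact MIP formulation.
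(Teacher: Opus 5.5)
Your proposal is correct and follows the same route as the paper: you check compact convex strategy sets, continuity, and (quasi-)concavity in the player's own action, then invoke Glicksberg's theorem. You also supply the justification the paper only asserts, namely that $u_i(\cdot,a_{-i})$ is an affine function minus a finite max of affine functions and hence concave; your Berge--Kakutani sketch is a valid self-contained version of the same argument.
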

 \begin{proof}
 
Each action set $[0,1]^k$ is nonempty, compact, and convex. The utility function $u_i(\mathbf{a})$ is continuous since it is the difference of linear function $\mu(\mathbf{a})$ (continuous function) and a maximum over finitely many continuous functions $\max_{\substack{j=1 \\ j \ne i}}^{k} \mu_j(\mathbf{a})$ (continous function) . Furthermore, utility function $u_i(\mathbf{a})$ is quasi-concave in $a_i$. Therefore, by Glicksberg's generalization of Nash's theorem ~\citep{glicksberg1952}, a Pure Nash equilibrium exists.
\end{proof}

\subsection{Complexity of Finding Pure Nash Equilibrium — Continuous Actions}

For fixed $a_{-i}$, player $i$'s utility 
  $u_i(a_i,a_{-i})=\mu_i(a)-\max_{j\neq i}\mu_j(a)$ is concave and piecewise-linear in $a_i$. By concavity, we know that in order for an action $a_i^*$ to be a best response one of the following slope conditions needs to hold:
(i) if $a_i^*\in(0,1)$ the slope is $0$,
(ii) if $a_i^*=0$ the slope is $\le 0$,
and (iii) if $a_i^*=1$ the slope is $\ge 0$.
We reformulate these conditions bellow, using linear constraints together with a linearization of the max operator.

\begin{theorem}[MIP Characterization of Pure Nash Equilibria]
\label{thm:milp_continuous}
Consider the continuous game introduced in Section~\ref{sec:model}. 
A strategy profile $\mathbf{a}^* \in [0,1]^k$ is a pure Nash equilibrium if and only if there exist auxiliary variables
\[
m_i \in \mathbb{R}, \quad y_{ij} \in \{0,1\}, \quad \lambda_{ij} \ge 0, \quad 
b_i^0, b_i^1 \in \{0,1\}, \quad \alpha_i, \beta_i \ge 0,
\]
satisfying the constraint system \eqref{eq:mip-system}. In particular, the set of pure Nash equilibria coincides with the feasible region of this MIP.
\end{theorem}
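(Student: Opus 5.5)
The plan is to reduce the equilibrium condition, player by player, to a first-order optimality condition for a one-dimensional concave maximization, and then encode that condition with linear constraints plus binary indicators. Fix $i$ and $a_{-i}^*$. Write $\mu_j(\mathbf a)=\Delta_{ij}a_i+c_j(a_{-i})$ for $j\neq i$. Then $a_i\mapsto -\max_{j\ne i}\mu_j$ is the negative of a maximum of affine functions, so it is concave and piecewise linear. Hence $u_i(\cdot,a^*_{-i})$ is concave on $[0,1]$, as already noted before the statement.

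For a concave function on an interval, $a_i^*$ is a global maximizer if and only if $0$ lies in the superdifferential plus the normal cone of $[0,1]$ at $a_i^*$. By the standard formula for the subdifferential of a pointwise maximum (Danskin), this reads
\[
\delta_i-\sum_{j\neq i}\lambda_{ij}\Delta_{ij}+\alpha_i-\beta_i=0,\qquad \sum_{j\ne i}\lambda_{ij}=1,\quad \lambda_{ij}\ge 0 .
\]
Here $\lambda_{ij}>0$ only for \emph{active} $j$, that is, those with $\mu_j(\mathbf a^*)=\max_{l\ne i}\mu_l(\mathbf a^*)$. The multiplier $\alpha_i\ge0$ may be positive only if $a_i^*=0$, and $\beta_i\ge0$ only if $a_i^*=1$. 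This is exactly the slope trichotomy (i)–(iii) stated before the theorem. Since $u_i$ is concave, the condition is both necessary and sufficient for a best response, so $\mathbf a^*$ is a pure Nash equilibrium if and only if this condition holds for every $i$.

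Next I would linearize each ingredient, which is how I expect \eqref{eq:mip-system} to be built.

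\emph{The maximum.} Impose $m_i\ge\mu_j(\mathbf a)$ for all $j\ne i$ and $m_i\le \mu_j(\mathbf a)+M(1-y_{ij})$, together with $\sum_j y_{ij}\ge1$. The argument from the discrete MIP theorem then forces $m_i=\max_{j\ne i}\mu_j(\mathbf a)$. Moreover, $y_{ij}=1$ is possible only for active $j$.

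\emph{Complementarity for $\lambda$.} Impose $\lambda_{ij}\le y_{ij}$.

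\emph{Bound multipliers.} Impose $a_i\le 1-b_i^0$, $a_i\ge b_i^1$, $\alpha_i\le Mb_i^0$, $\beta_i\le Mb_i^1$ and $b_i^0+b_i^1\le1$.

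Both directions then follow.

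\textbf{Nash $\Rightarrow$ feasible.} Take $m_i$ equal to the true maximum and set $y_{ij}=1$ exactly on the active set. Take $\lambda_{ij}$ from the supergradient certificate. Set $b_i^0=1$ iff $a_i^*=0$ and $\alpha_i>0$, and similarly for $b_i^1$.

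\textbf{Feasible $\Rightarrow$ Nash.} The constraints force $\lambda$ to be supported on active indices and $\alpha_i,\beta_i$ to vanish off the appropriate boundary. Hence $\delta_i-\sum_j\lambda_{ij}\Delta_{ij}$ is a supergradient of $u_i(\cdot,a^*_{-i})$ at $a_i^*$, and it satisfies the normal-cone sign conditions. Concavity then gives optimality.

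The step I expect to be the main obstacle is making the big-$M$ constants valid. This requires showing that on $[0,1]^k$ both $|\mu_j(\mathbf a)-\mu_l(\mathbf a)|$ and the needed multipliers $\alpha_i,\beta_i$ are bounded. For $\alpha_i,\beta_i$ I would use the bound $|\delta_i|+\max_j|\Delta_{ij}|$, which suffices because $\sum_j\lambda_{ij}=1$. For $M$ I would take $2\max_j\big(|d_j|+|\delta_j|+\sum_l|\Delta_{lj}|\big)$, so that no valid certificate is cut off. The second obstacle is stating Danskin's formula carefully at kinks, where several $j$ are tied. That is precisely why the constraint must allow multiple $y_{ij}=1$ rather than the single selector used in the discrete formulation.
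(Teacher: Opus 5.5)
Your route is the paper's: concavity of $u_i(\cdot,a^*_{-i})$, the slope trichotomy, a big-$M$ encoding of $m_i=\max_{j\ne i}\mu_j$, and binaries $b_i^0,b_i^1$ that switch on the boundary multipliers $\alpha_i,\beta_i$. For the constraints you actually write, both directions go through. Your explicit big-$M$ bounds also supply something the paper never addresses.

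The one place you diverge is substantive, and it matters. \eqref{eq:mip-system} keeps the single selector $\sum_{j\ne i}y_{ij}=1$. Together with $0\le\lambda_{ij}\le y_{ij}$ and $\sum_{j\ne i}\lambda_{ij}=1$, this forces $\lambda_{i\cdot}$ to be a unit vector, so $g_i=\Delta_{ij^*}$ for a single active $j^*$. At $a_i^*\in\{0,1\}$ this loses nothing: pick $j^*$ maximizing, resp.\ minimizing, $\Delta_{ij}$ over the active set $A$. At an interior kink, however, the correct condition $\min_{j\in A}\Delta_{ij}\le\delta_i\le\max_{j\in A}\Delta_{ij}$ is replaced by $\delta_i\in\{\Delta_{ij}:j\in A\}$. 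As a result, the ``only if'' half of the statement is false.

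Here is a counterexample. Take $k=3$, $\delta_1=1$, $\Delta_{13}=2$, $d_2=1$, and every other parameter $0$. Then $\mu_1=a_1$, $\mu_2=1$, $\mu_3=2a_1$, so $u_1=a_1-\max\{1,2a_1\}$ is uniquely maximized at $a_1=1/2$. Since $u_2$ and $u_3$ do not depend on $a_2$ and $a_3$, every $(1/2,a_2,a_3)$ is a pure Nash equilibrium. But $a_1=1/2$ forces $b_1^0=b_1^1=0$, hence $\alpha_1=\beta_1=0$. The last constraint would then require $1=\delta_1=g_1\in\{\Delta_{12},\Delta_{13}\}=\{0,2\}$, which is impossible. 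Checking $a_1<1/2$ and $a_1>1/2$ the same way shows the printed system is infeasible for every $\mathbf a$, even though equilibria exist. The paper's proof passes over this by calling $g_i$ ``a subgradient'' without checking that every subgradient is attainable under its constraints.

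So your relaxation to $\sum_{j\ne i}y_{ij}\ge 1$ is the repair the statement needs, not a stylistic variant. It lets $y_{ij}=1$ on all of $A$, so $g_i$ sweeps $\mathrm{conv}\{\Delta_{ij}:j\in A\}$. You should present it as a correction rather than as how you ``expect \eqref{eq:mip-system} to be built''; what you prove is a corrected theorem. For the system as printed, only the direction feasible $\Rightarrow$ Nash holds. Your argument covers that direction too, because it never uses $\sum_{j\ne i}y_{ij}\le 1$ or the size of the big-$M$ constants, so it applies verbatim to any feasible point of the paper's system.
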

\begin{subequations}\label{eq:mip-system}
\begin{align}
&\mu_i = \delta_i a_i + \sum_{j\neq i}\Delta_{ji} a_j + d_i, \\
&m_i \ge \mu_j, && \forall j\neq i, \\
&m_i \le \mu_j + M(1-y_{ij}), && \forall j\neq i, \\
&\sum_{j\neq i} y_{ij} = 1,
\qquad
\sum_{j\neq i}\lambda_{ij} = 1, \\
&0 \le \lambda_{ij} \le y_{ij}, 
\qquad
g_i = \sum_{j\neq i}\lambda_{ij}\Delta_{ij}, \\
&b_i^0 + b_i^1 \le 1,
\qquad
a_i \le 1 - b_i^0,
\qquad
a_i \ge b_i^1, \\
&0 \le a_i \le 1,
\qquad
\alpha_i \le M_\alpha b_i^0,
\qquad
\beta_i \le M_\beta b_i^1, \\
&\delta_i - g_i + \alpha_i - \beta_i  = 0.
\end{align}
\end{subequations}
\begin{proof}
Notice that, similarly as in discrete case proof we formulate $m_i$ and the appropriate constraints such that $m_i$ represents the best competitor's payoff. 


Player $i$'s utility is:
\[
u_i(a_i,a_{-i})=\mu_i(\mathbf a)-\max_{j\neq i}\mu_j(\mathbf a),
\qquad
\mu_i(\mathbf a)=\delta_i a_i+\sum_{j\neq i}\Delta_{ji}a_j+d_i.
\]
Hence
\[
\frac{\partial \mu_i(\mathbf a)}{\partial a_i}=\delta_i.
\]
Let
\[
g_i:=\sum_{j\neq i}\lambda_{ij}\Delta_{ij},
\]
which represents a subgradient of the max term with respect to $a_i$ for the player $i$.
Therefore the slope of the player's $i$ utility with respect to $a_i$ is:
\[
\frac{\partial u_i(\mathbf a)}{\partial a_i}=\delta_i-g_i.
\]

As previously discussed, since $u_i(\cdot,a_{-i})$ is concave and piecewise-linear in $a_i$, an action $a_i^*\in[0,1]$
is a best response if and only if
\[
\begin{cases}
\delta_i-g_i=0 & \text{if } a_i^*\in(0,1),\\[4pt]
\delta_i-g_i\le 0 & \text{if } a_i^*=0,\\[4pt]
\delta_i-g_i\ge 0 & \text{if } a_i^*=1.
\end{cases}
\]

Let's introduce $b_i^0,b_i^1\in\{0,1\}$ variables and stated constraints bellow:
\[
b_i^0+b_i^1\le 1,
\qquad
a_i\le 1-b_i^0,
\qquad
a_i\ge b_i^1,
\qquad
0\le a_i\le 1.
\]
Now notice that there are three possible cases values for pair of variables $(b_i^0, b_i^1)$:  
Thus $b_i^0=1, b_i^1=0$ forces $a_i=0$, $b_i^0=0, b_i^1=1$ forces $a_i=1$, and $b_i^0=b_i^1=0$ corresponds to an interior solution.

By introducing $\alpha_i,\beta_i\ge 0$ and introducing the following contraints:
\[
\delta_i-g_i+\alpha_i-\beta_i=0,
\qquad
\alpha_i\le M_\alpha b_i^0,
\qquad
\beta_i\le M_\beta b_i^1.
\]
We make sure that if $a_i\in(0,1)$ then $b_i^0=b_i^1=0$ implies $\alpha_i=\beta_i=0$ and therefore $\delta_i-g_i=0$.
If $a_i=0$ then $\beta_i=0$ and $\delta_i-g_i=-\alpha_i\le 0$.
If $a_i=1$ then $\alpha_i=0$ and $\delta_i-g_i=\beta_i\ge 0$, which was exactly to be proven.
\end{proof}




\section{Discussion}

To the best of our knowledge, we propose the first game-theoretic model that analyzes open-sourcing strategies in the AI race. We comment on the existence of Nash equilibria, characterize conditions under which (Pure) Nash equilibria exist—both in discrete and continuous action spaces—and analyze the complexity of finding such equilibria. In both cases, we propose tractable methods for computing Pure Nash equilibria. Beyond these modeling contributions and theoretical results, we provide interpretations throughout our work, including mathematical reasoning that explains the latest events and may inform new regulatory policies.

We hope that our work constitutes a first step toward understanding open source as a rational and verifiable choice within the framework of game theory. In that spirit, many open questions remain to be addressed in future research.

Possible technical extensions of our work include strengthening our hardness results, for instance by extending our proof to show that determining existence of any pure Nash equilibrium is NP-hard, as well as relaxing certain modeling assumptions, such as the independence of gains across players. Besides potential technical contributions, this work opens the door to a number of conceptual contributions, such as extending the model to a Bayesian version of our game-theoretic model. Specifically, if players in the AI race do not have full knowledge of the benefits they or their competitors might obtain from open-sourcing, it would be important to analyze what their optimal strategies would be. Furthermore, the dynamics in the Bayesian version of our model could be affected by public statements that come with certain levels of commitment, which could change the overall dynamics and would therefore present an important setting to consider.

Finally, our model may provide a mathematical foundation for future policy design that could vary from policies such as different incentive-based mechanisms, which could affect gains in our model, to potential mandatory disclosure requirements, which could influence the information available about competitors.

\section{Acknowledgment}
A. Mladenovic is supported by an IVADO fund under the Canada First Research Excellence Fund grant to develop robust, reasoning, and responsible artificial intelligence, FRQNT fellowship and Mila EDI Excellence scholarship.
%
%
%
%
%
%





\bibliographystyle{splncs04}
\bibliography{refs}

\end{document}